\crefname{equation}{eq.}{eqs.}
\Crefname{equation}{Eq.}{Eqs.}
\crefname{lemma}{lemma}{lemmas}
\Crefname{lemma}{Lemma}{Lemmas}
\newcommand{\E}[1]{\mathbb{E}\left[#1\right]}
\newcommand{\pois}{\textrm{Pois}}
\newcommand{\gam}{\mathrm{Gam}}
\newcommand{\mult}{\mathrm{Mult}}
\newcommand{\g}{\,\vert\,}
\newcommand{\s}{\,;\,}
\newcommand{\rmdo}{\mathrm{do}}
\newcommand{\be}{\begin{equation}}
\newcommand{\ee}{\end{equation}}
\newcommand{\bea}{\begin{eqnarray}}
\newcommand{\eea}{\end{eqnarray}}
\newcommand{\ccup}[1]{\left\{#1\right\}}
\newcommand{\bup}[1]{\left(#1\right)}
\renewcommand{\ref}[1]{[\ref{#1}]}
\newcommand{\mbx}{\mathbf{x}}
\newcommand{\mby}{\mathbf{y}}
\newcommand{\mba}{\mathbf{a}}
\newtheorem{thm}{Theorem}[section]
\newtheorem{prop}[thm]{Proposition}
\DeclareRobustCommand{\parhead}[1]{\textbf{#1}~}
\newcommand{\personi}{Isabela}
\newcommand{\personj}{Judy}
\begin{document}

\title{Estimating Social Influence from Observational Data}

\clearauthor{%
	\Name{Dhanya Sridhar} \Email{dhanya.sridhar@mila.quebec}\\
	\addr Mila-Quebec AI Institute and Université de Montréal
	\AND
	\Name{Caterina De Bacco} \Email{caterina.debacco@tuebingen.mpg.de}\\
	\addr Max Planck Institute for Intelligent Systems %
	\AND
	\Name{David Blei} \Email{david.blei@columbia.edu}\\
	\addr Columbia University 
}

%
%
%


\maketitle

\begin{abstract} 
We consider the problem of estimating social influence, the effect that a person's behavior has on the future behavior of their peers. 
The key challenge is that shared behavior between friends could be equally explained by influence or by two other confounding factors: 1) latent traits that caused people to both become friends and engage in the behavior, and 2) latent preferences for the behavior. 
This paper addresses the challenges of estimating social influence with three contributions. 
First, we formalize social influence as a causal effect, one which requires inferences about hypothetical interventions. 
Second, we develop Poisson Influence Factorization (PIF), a method for estimating social influence from observational data. 
PIF fits probabilistic factor models to networks and behavior data to infer variables that serve as substitutes for the confounding latent traits. 
Third, we develop assumptions under which PIF recovers estimates of social influence. 
We empirically study PIF with semi-synthetic and real data from Last.fm, and conduct a sensitivity analysis. 
We find that PIF estimates social influence most accurately compared to related methods and remains robust under some violations of its assumptions.
\end{abstract}

\section{Introduction}
\label{sec:introduction}

This paper is about analyzing social network data to estimate
social influence, the effect that its members have on each other. 
How does the past
behavior of a person in a network influence the current behavior of
their peers?
Researchers across many fields have studied questions that involve social influence.
For example, \citet{bond2012voter} study whether Facebook users influence their friends to vote in elections;
\citet{christakis2007obesity} ask whether a person's obesity status affects those of their family and friends;
\citet{bakshy2012social} study whether social influence increases the effectiveness of an ad campaign.

We develop Poisson Influence Factorization (PIF), a new
method for estimating social influence.
PIF uses observational data from a social network, the past behavior of its
members, and the current behavior of its members.
PIF can be applied to study social influence in many settings including Facebook users sharing articles, Twitter users using hashtags, or e-commerce site users purchasing products.
To explain ideas in the paper, we will use the example of people purchasing items.

This paper makes three main contributions.  The first is to frame the
problem of estimating social influence as a causal inference, 
a question about a hypothetical intervention upon variables in a system.
Informally, we define the social influence of a person on their peer by asking:
if we could have ``made'' the person buy an item yesterday, would their peer purchase the item today?
Intuitively, for a person who has social influence, if we hypothetically intervene to make them buy an item yesterday, we expect that their peer is more likely to buy the item today.

The second contribution is to introduce the assumptions
from which we can estimate social influence with observational data.
The challenge with observational data is that 
we cannot intervene on people's past purchases and observe the effect on their peers' current purchases.
Instead, several factors, called confounders, affect what people purchase yesterday and what their
peers purchase today.
Confounders create a correlation between people's behavior that is not driven by social influence.
We articulate the assumptions needed to distinguish the causal effect of social influence
from the effects of confounders.

The third contribution is the PIF algorithm
for estimating social influence from observed networks.
The challenge with estimating social influence is that the confounders
are usually unobserved.
However, the confounders drive the observed purchases and connections in the network,
which provide indirect evidence for them.
We operationalize this insight by fitting well-studied statistical models of networks \citep{ball2011efficient,gopalan2013efficient,contisciani2020} 
and models such as matrix factorization \cite{lee2001algorithms,gopalan2015scalable}.
We will show how these models produce variables that contain some of the same information as the confounders.
PIF uses these variables to reduce the bias due to confounders when estimating social influence.

To understand the challenges with estimating social influence, consider the following example:
\begin{example}
	\label[example]{eg:sports_drink}
	Yesterday, \personi{} bought a sports drink and today, her friend \personj{} bought
	the same sports drink. Did \personi{} influence \personj{} to buy the
	drink or would \personj{} have bought it anyway?
\end{example}
\personi{} might have caused \personj{} to buy the drink because of her social influence.
However, the shared purchasing behavior can be explained in different ways.
\personi{} and \personj{} became friends because of their shared interest in sports and this interest caused them each to buy the drink. 
(The idea that people with shared traits are more likely to connect is known as homophily.)
Alternatively, \personi{} and \personj{} might both happen to like sports drinks, even though they became friends because they live in the same building.
It is their preference for sports drinks that drives their purchases, irrespective of the reasons they became friends.

Thus, the shared purchasing behavior might be evidence for social influence, but it might also arise for other reasons.
The factors that cause people to form connections and purchase items are confounders of social influence.
When we observe the confounders, we can use causal adjustment techniques \citep{Pearl:2009a} to estimate causal effects.
However, PIF estimates social influence when the confounders are not directly observed.



The main idea behind PIF is that the unobserved confounders, though
not explicitly coded in the data, drive the observed connections between people and also drive their purchases.
PIF makes assumptions about the structure of these relationships, and then uses probabilistic models of networks and purchases to estimate 
variables that contain some of the same
information as the unobserved confounders.
PIF uses these estimated variables in 
causal adjustment to estimate social influence.

We study PIF empirically with semi-synthetic
simulations, using a novel procedure that uses real social networks to
synthesize different purchasing scenarios under varying amounts of
confounding.  We find that PIF is more accurate at recovering
influence than related methods.  
Finally, we apply PIF to real data from the song-sharing platform
Last.fm to perform an exploratory study.
We find evidence of correlated song-listening behavior between friends,
but the results from PIF suggest that most of it is attributed to shared preferences
and influence plays a small role.
The code and data to reproduce the empirical results are available at \url{https://github.com/blei-lab/poisson-influence-factorization}.

\parhead{Related work.}
This paper relates to other work on estimating social influence from observed data.
There is a line of work that develops theory and methods for estimating social influence
when all the confounders are observed \citep{ogburn2017social,sharma2016distinguishing,eckles2020bias,aral2009distinguishing}.
However, \citet{shalizi2011homophily} caution that is usually difficult to observe all confounders; unobserved factors that drive connections and purchases can still bias our estimates of social influence. PIF works in the context of this problem.

In using latent variable models to estimate variables that help causal inference,
PIF adapts and extends ideas from
\citet{Wang:Blei:2019,Shalizi:McFowland:2016}. \footnote{There has been
	scholarly debate about some of the identification theory behind
	\citet{Wang:Blei:2019}, particularly \citet{ogburn2019comment,ogburn2020counterexamples,wang2020towards,wang2019blessings}.  This paper is orthogonal to that debate;
	it extends the idea of substitute confounders in the context of formalizing and estimating social influence.}
Building on their work, we show that models of both networks and purchases are needed to adjust
for the bias due to confounders.
Similar to PIF, \citet{chaney2015probabilistic} extend latent variable models to model purchases based on both social networks and unobserved preferences, but they do not consider causal inference of social influence.
\citet{guo2020learning,veitch2019using} have also used
inferred properties of social networks to estimate different types of causal effects but
they do not consider social influence.

At the intersection of causality and networks, there are two other lines of work that are distinct to this paper.
First, there is work on the bias introduced by social networks when estimating causal effects \citep{ogburn2017social,sherman2018identification,sherman2020general}.
In this line of work, social influence undermines valid inference, inducing dependence across the networked samples.
In this paper, however, social influence is itself the target of causal inference.
Second, there is work on estimating social influence from time series data \citep{soni2019cascades,anagnostopoulos2008influence,laFond2010randomization} or randomized experiments \citep{aral2012identifying,aral2013engineering,taylor2013selection,toulis2013estimation}.
In contrast, we focus on observational settings.

\section{Social Influence}
\label{sec:causal}

In this section, we introduce a causal model of social influence. We formalize social influence as a causal quantity. Then, we use the assumed causal model show how social influence can be identified in an ideal setting where the confounders are observed.
 
\begin{figure}[h!t]
	\caption{The causal graphical model and a description of each variable.}
	\label{fig:causal-gm}
	\begin{minipage}{0.45\linewidth}
		\centering
		\includegraphics[scale=0.37]{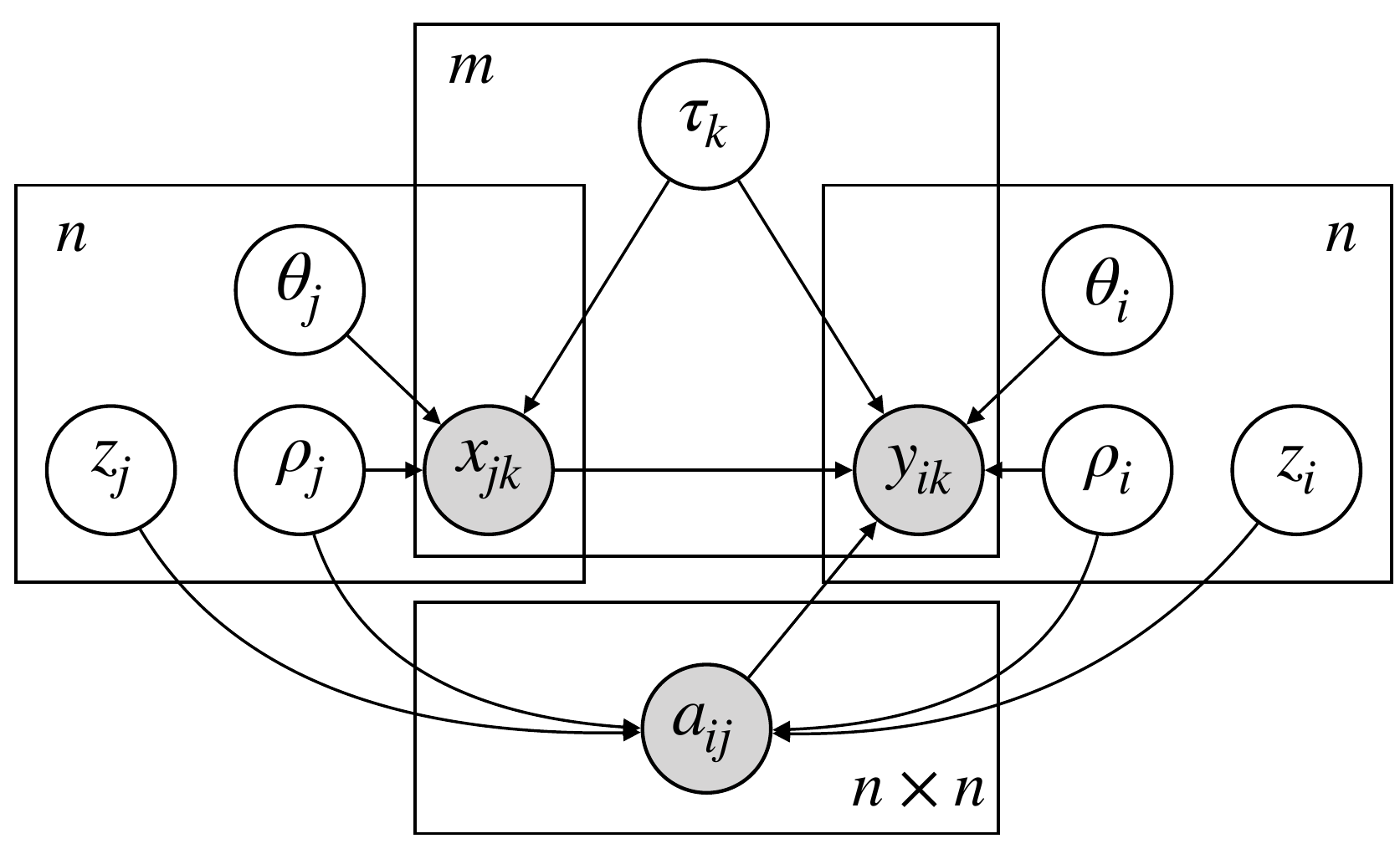}
	\end{minipage}
	\begin{minipage}{0.55\linewidth}
		\footnotesize
		\centering
		\begin{tabular}{l}
			$x_{jk} \in \{0,1\} $: person $j$ bought item $k$ yesterday \\
			$y_{ik} \in \mathbb{N} $: person $i$'s consumption of item $k$ today \\
			$a_{ij} \in \{0,1\} $: person $i$ is connected to person $j$ \\
			$\tau_k \in \mathbb{R}^P$: item $k$'s attributes \\
			$\theta_i \in \mathbb{R}^P$: person $i$'s preferences for attributes \\
			$z_i \in \mathbb{R}^D$: person $i$'s traits that affect connections \\
			$\rho_i \in \mathbb{R}^K$: person $i$'s traits that drive connections and purchases
		\end{tabular}
	\end{minipage}
\end{figure}

The estimation of social influence involves $n$ people connected
in a social network, their purchases across $m$ items ``yesterday'' and ``today''.
The social network is represented by an adjacency matrix $\mba$ where each entry $a_{ij} \in \{1,0\}$ indicates whether person $i$ and person $j$ are connected or not.
Yesterday's purchases are represented by a binary matrix $\mbx$,
where an entry $x_{jk} \in \{1,0\}$ indicates whether person $j$
bought item $k$ yesterday or not.  
Today's purchases are represented by a matrix $\mby$, where each entry $y_{ik}$ is a count of the units of item $k$ that person $i$ bought today.

\subsection{A Causal Model of Social Influence}

The causal graphical model in \Cref{fig:causal-gm} 
captures the assumptions about how the variables are drawn.
A connection $a_{ij}$ between person $i$ and person $j$ is driven by
the per-person latent variables $\{z_{i}, \rho_{i}\}$ and $\{z_{j}, \rho_{j}\}$.
Each per-person variable (e.g., $\rho_{i}$) is a vector of traits that
capture the reasons why person $i$ forms connections in the network.

Yesterday's purchase $x_{jk}$ is driven by per-person
latent variables $\{\theta_{j}, \rho_{j}\}$ and a per-item latent variable $\tau_{k}$.
The variable $\tau_{k}$ is a vector of attributes
that capture people's reasons for buying item $k$.
The variables $\theta_j$ and $\rho_j$ capture person $j$'s preferences for those attributes.
We distinguish between the per-person variables $\rho_{j}$, $\theta_j$ and $z_j$.
The variable $\rho_{i}$ captures traits that affect both purchases and connections
while the variables $\theta_i$ and $z_i$ only affect purchases or connections (but not both).

Today's purchase $y_{ik}$ is driven by the same per-person and per-item
variables that drove yesterday's purchases
but it also depends on the social influence from peers. 
This dependence is captured by the edges $x_{jk} \rightarrow y_{ik}$ and $a_{ij} \rightarrow y_{ik}$ in the causal model (\Cref{fig:causal-gm}).
More precisely, the purchase $y_{ik}$ depends on all of person $i$'s connections, $a_i = \{a_{i1}, \ldots, a_{in}\}$, and
all of the purchases of item $k$,  $x_k = \{x_{1k}, \ldots, x_{nk}\}$.

\subsection{Defining Social Influence}
We now define the social influence of each person $j$ as a causal effect,
formalized with the language of interventions.
Specifically, the distribution $p(y_{ik}\s \rmdo(x_{jk}=x))$ is an \emph{interventional} distribution over the purchase $y_{ik}$ where the $\rmdo(\cdot)$ operator indicates that the variable $x_{jk}$ was set to the value $x$ by intervening \citep{Pearl:2009a}.
The ideal intervention may be hypothetical. Nonetheless, interventional distributions allow us to define causal effects.

First, we define the social influence of person $j$ with respect to a single item $k$ and a single person $i$ that is connected to person $j$. It is the causal effect,
\begin{align}
\label{eq:psi_ijk}
\begin{split}
\psi_{ijk} &= \mathbb{E}[y_{ik} \g a_{ij}=1 \s \rmdo(x_{jk}=1)]
- \mathbb{E}[y_{ik} \g a_{ij}=1 \s \rmdo(x_{jk}=0)]. 
\end{split}
\end{align}
These expectations are over interventional distributions, the first where person $j$ is ``made''  to
purchase item $k$ yesterday and the second where person $j$ is ``prevented'' from
purchasing it yesterday.
They are distributions over the purchase $y_{ik}$ under these two interventions (and conditioned on person $i$ and $j$ being connected).

Using the causal quantity $\psi_{ijk}$, the main causal effect of interest in this paper is the \emph{average} social influence of a person $j$.
It is defined with respect to the $n_j$ peers of person $j$ and the $m$ items,
\begin{align}
\label{eq:psi_j}
\begin{split}
\psi_{j} &= \frac{1}{n_j} \sum_{i: a_{ij}=1}  \frac{1}{m} \sum_{k} \psi_{ijk}.
\end{split}
\end{align}
This quantity captures person $j$'s average influence across their peers
and all items.
If person $j$ has social influence, on average across their peers and items, we expected
the difference in purchasing rate across interventions to be large.

\subsection{Challenges to Estimating Social Influence}
The challenge with estimating average social influence (\Cref{eq:psi_j}) is 
that the terms $\psi_{ijk}$ (\Cref{eq:psi_ijk}) involve hypothetical interventions.
With the observed data, we could approximate the expected \emph{conditional} difference,
\begin{align}
\begin{split}
\E{y_{ik} \g a_{ij}=1, x_{jk}=1} -  \E{y_{ik} \g a_{ij}=1, x_{jk}=0}.
\end{split}
\end{align}
Informally, this captures whether person $j$'s purchases yesterday are correlated with the purchases of their peers today.
However, this difference is not necessarily the causal quantity $\psi_{ijk}$ \Cref{eq:psi_ijk}.
This is because in the observed data, the purchases $x_{jk}$ and $y_{ik}$ are correlated for reasons other than the social influence of person $j$ on their peer $i$,
as we discussed in \Cref{sec:introduction}.
For example, in the assumed model in \Cref{fig:causal-gm}, the purchases $x_{jk}$ and $y_{ik}$ share causes such as $\tau_k$, item $k$'s attributes.

More generally, the observed variables are affected by \textit{confounders}, variables that 
induce a non-causal dependence between purchases yesterday and those of today.
The presence of these variables means that when we observe a correlation between
the purchasing habits of a person yesterday and those of their peers today,
we cannot attribute it to the person's social influence alone. 


Causal graphical models clarify which variables create confounding bias,
based on backdoor paths in the graph that induce non-causal associations; see \citet{Pearl:2009a} for full details.
The graph in \Cref{fig:causal-gm} shows two backdoor paths between the intervened variable
$x_{ik}$ and the variable $y_{jk}$: (1) via item attributes,
$x_{ik} \leftarrow \tau_k \rightarrow y_{jk}$; (2) via traits involved
in homophily,
$x_{ik} \leftarrow \rho_i \rightarrow a_{ij} \leftarrow \rho_j
\rightarrow y_{jk}$. 
(Note that it is because we condition on the social network 
in \Cref{eq:psi_ijk} that the second backdoor path
is opened.)
The variables $\tau_{k}$ and $\rho_{i}$, which appear along backdoor paths, are confounders of social influence, the causal effect
represented by the $x_{jk} \rightarrow y_{ik}$ (\Cref{fig:causal-gm}).


\subsection{Estimating Social Influence with Observed Confounders}
When the confounders are observed, we can use causal adjustment \citep{Pearl:2009a} to estimate causal effects.
As a step towards estimating social influence in the presence of unobserved confounders, we consider the easier setting where we observe the per-person variables $\rho_{1:n}$ and per-item variables $\tau_{1:m}$.
In this setting, we rewrite each interventional quantity $\psi_{ijk}$ in terms of the observed data distribution. This derivation involves two ideas.

The first idea is that social influence $\psi_{ijk}$ is the \emph{marginal} effect of yesterday's purchase $x_{jk}$ on today's purchase $y_{ik}$, marginalizing out all other causes of the purchase $y_{ik}$.
The causal model (\Cref{fig:causal-gm}) shows that calculating the social influence of person $j$ requires marginalizing out
the connections $a_i^{-j} = a_{i}\setminus \ccup{a_{ij}}$ of person $i$ and
the purchases $x_{k}^{-j} = x_{k}\setminus \ccup{x_{jk}}$ of item $k$ that do not involve person $j$. 
That is,
\begin{align}
\label{eq:marginal}
\begin{split}
\psi_{ijk} &= \mathbb{E}_{a_i^{-j}, x_k^{-j}} \bigg[ \mathbb{E}[y_{ik} \g a_{ij}=1, a_i^{-j}, x_k^{-j} \s \rmdo(x_{jk}=1)]  \\
& \qquad \qquad - \mathbb{E}[y_{ik} \g a_{ij}=1, a_i^{-j}, x_k^{-j}\s \rmdo(x_{jk}=0)] \bigg]. 
\end{split}
\end{align}
The inner distribution is with respect to the purchase $y_{ik}$. The outer distribution over connections and purchases $a_i^{-j}$ and $x_k^{-j}$ is specified by the assumed causal model in \Cref{fig:causal-gm}.

The second idea is that when the confounders for a causal effect of interest are observed, we can rewrite an unobserved interventional distribution in terms of an observed conditional distribution using a formula called backdoor adjustment \citep{Pearl:2009a}.
First, define,
\begin{align}
\mu_{ik}(a, x) = \mathbb{E}[y_{ik} \g a_{ij}=a, x_{jk}=x, a_i^{-j}, x_k^{-j}, \rho_{i}, \tau_{k}].
\end{align}
This function of conditional expected outcomes involves the per-item and per-person confounders.

The backdoor adjustment allows us to rewrite the interventional distributions in \Cref{eq:marginal} in terms of the functions $\mu_{ik}(\cdot)$,
\begin{align}
\label{eq:psi_j_bd}
\begin{split}
\psi_{j} &=  \frac{1}{n_j \cdot m} \sum_{\substack{i:a_{ij}=1, \\ k}}   \mathbb{E}_{\rho_i, \tau_k}  \bigg[ 
\mathbb{E}_{a_i^{-j}, x_{k}^{-j}} \big[ \mu_{ik}(1,1) - \mu_{ik}(1,0)
\big] \bigg].
\end{split}
\end{align}
Intuitively, this formula ``adjusts'' for the effects that a person's traits $\rho_{i}$ and an item's attributes $\tau_{k}$ have on their purchasing. 
The average social influence of person $j$ is the 
remaining difference in purchases among their peers when person $j$ buys items versus when they do not.
In the next section, we develop one estimator for the quantity in \Cref{eq:psi_j_bd}, using regression to fit the function $\mu_{ik}$($\cdot$).

\parhead{Roadmap.} In this paper, 
the confounders -- per-person variables $\rho_i$ and per-item variables $\tau_k$-- are latent and we cannot directly use the strategy given in \Cref{eq:psi_j_bd} to estimate each person's average social influence $\psi_j$.
In the next section, we develop Poisson Influence Factorization (PIF), a method for estimating social influence that addresses the challenge presented by latent confounders.
To develop PIF, we exploit the fact that the confounders, though latent,
are common causes of multiple social network connections and purchases.
By fitting latent variable models of networks and multivariate data,
we can construct variables that contain some of the same information as
the confounders. These variables can then serve as substitutes when estimating social influence.

\section{Poisson Influence Factorization}
\label{sec:pif}

Poisson Influence Factorization (PIF) involves two main ideas.
First, we show how to construct variables, called substitutes, that
contain some of the same information contained in the latent confounders.
Second, we develop an estimator of social influence that uses the substitutes
to adjust for some of the bias due to confounders. 
The result is a practical algorithm for estimating social influence from observational data.

\subsection{Substitutes for Confounders}

The first step of PIF is to construct substitutes for
the per-person latent variables $\rho_i$ and per-item latent variables $\tau_k$ by fitting latent variable models to the social network and yesterday's purchases.
The idea is that the per-person variables $\rho_{i}$ and $z_i$
drive each observed connection $a_{ij}$ in the social network.
If we fit a latent variable model to the observed social network that has this same structure, the inferred latent variables
will contain some of the same information contained by the confounder $\rho_{i}$.
The same idea applies to the observed data of purchases from yesterday and the confounder
$\tau_{k}$.
To implement this strategy, we fit models of networks \citep{holland1983stochastic,de2017community,ball2011efficient,contisciani2020,peixoto2019bayesian,gopalan2013efficient,hoff2008modeling}
and matrix factorization methods \citep{lee2001algorithms,gopalan2013scalable,gopalan2015scalable}.

The idea behind substitutes was developed in \citet{Wang:Blei:2019}, in the deconfounder algorithm
for multivariate treatments.
Here, we extend these ideas to the social influence setting, where 
the social network and yesterday's purchases (``treatments'' in our setting) 
are both needed to capture the information in the confounders.

We illustrate the technical ideas behind substitutes by defining them for the per-person variable $\rho_{i}$. 
We fit a latent variable model to the observed network $\mba$.
Standard generative models for networks model the likelihood of observing a connection
between person $i$ and person $j$ using $K$-dimensional latent variables $c_{i}$ and $c_{j}$ on nodes, often called community membership. 
For concreteness, we will focus on a generative model with a Poisson likelihood, which has been
well-studied for modeling social networks \citep{gopalan2013efficient}.
The model assumes that each connection in the network is conditionally independent given the latent variables,
\be
\label{eq:a}
p(\mba \g c) = \prod_{i<j}\pois\bup{a_{ij};\sum_{q=1}^{K}c_{iq}c_{jq}}\quad,
\ee
We perform inference in this model over the latent variables $c_{1:n}$.
Each inferred variable $\hat{c}_i$ is a substitute for some of the per-person confounders $\rho_i$.
We call the variables $\hat{c}_{1:n}$ per-person substitutes.

\begin{wrapfigure}[17]{r}{0.5\textwidth}
	\centering
	\caption{ (Left) The per-person confounder $\rho_{i}$ can be separated into 
		traits $u_i$ that affect more than one connection in $a_i$ and in traits $v_i$ that affect a single
		connection in $a_i$. (Right) The latent variable model posits the same conditional independence structure  over $a_i$ as the causal model on the left. \label{fig:subs}}
	\includegraphics[scale=0.55]{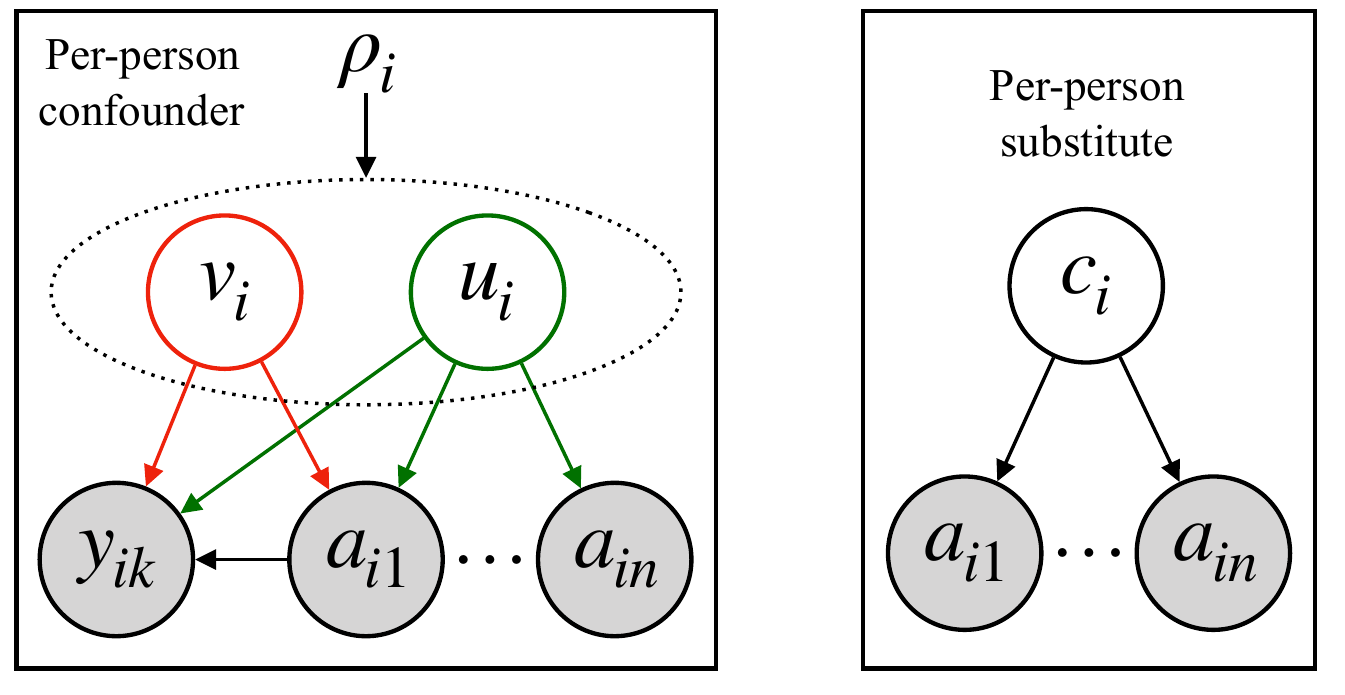}
\end{wrapfigure}

\parhead{Why are substitutes valid?}
\Cref{fig:subs} illustrates the justification behind per-person substitutes. Consider the connections $a_i$ made by person $i$, and for ease of explanation, suppose that
there are no traits $z_i$, which only affect the connections $a_i$.
Then, \Cref{fig:subs} shows that the per-person confounder $\rho_{i}$ can be separated into the traits $u_i$ (outlined in green) that affect more than one connection in the set $a_i$ and traits $v_i$ (outlined in red), which affect only one connection.
The assumed causal model on the left in \Cref{fig:subs} implies that each connection $a_{ij} \in a_i$ is conditionally independent given the per-person variable $u_i$. That is,
\begin{align}
\begin{split}
p(a_i \g u_i) = \prod_{j} p(a_{ij} \g u_i).
\end{split}
\end{align}
Consider the generative model on the right in \Cref{fig:subs}.
We see that it posits the same conditional independence of observed variables $a_i$ given the latent variables $c_i$.
Assume that the generative model well-approximates the empirical distribution of the network, captured by each row $a_i$ of the adjacency matrix.
Then, the per-person variables $c_i$ that render each observation $a_{ij}$ conditionally independent of the others must coincide with the variables $u_i$ posited in the causal model in \Cref{fig:causal-gm}.
We make this idea concrete with a formal result and proof in the appendix.

\parhead{When does this not work?}
The per-person substitute $\hat{c}_i$
cannot capture the confounder $v_i$, which only affects one of the connection $a_i$.
This fact is because a per-person substitute $\hat{c}_i$ can still
render the observed connections $a_i$ conditionally independent 
without capturing the trait that affects a single connection.
Put differently, such traits leave no observable implications
in the data.

\parhead{What information is captured?} The per-person substitute $\hat{c}_i$ contains information about both variables $\rho_i$ and $z_i$, since both drive the connections $a_i$ and render them conditionally independent (\Cref{fig:causal-gm}). 
However, only the per-person variables $\rho_i$ are confounders.
To target $\rho$-specific information, 
we can instead fit a \textit{joint} factor model,
\begin{align}
\label{eq:ax}
\begin{split}
p(\mba, \mbx|c,w) &= p(\mba|c) \,p(\mbx|c,w)
= \prod_{i,j}\pois(a_{ij};c_{i}\cdot c_{j})\, \prod_{i,k}\pois(x_{ik}; c_{i}\cdot w_{k})
\end{split}
\end{align}
which uses the community membership $c$ to jointly model the network and yesterday's purchases. 
We can construct per-person substitutes with inferred variables $\hat{c}_i$ as before. 
We will compare both methods for constructing per-person substitutes in the empirical studies.

We will use the same ideas to find substitutes for the per-person confounders $\tau_k$ that capture the attributes of an item $k$ that cause people to purchase it (or not).
As we noted, finding substitutes relies on models that posit the conditional independence of observed data given latent variables (e.g., \Cref{eq:a}).
For heterogeneous data such as people's purchases of multiple products, one such a model is Poisson matrix factorization \citep{gopalan2013scalable}.
Thus, to construct substitutes for some of the latent confounders $\tau_k$, we fit Poisson matrix factorization to yesterday's purchases,
\be
\label{eq:x}
p(\mbx|d,w) =\prod_{i,k}\pois\bup{x_{ij};\sum_{q=1}^{Q}d_{iq}\,w_{kq}} \quad,
\ee
where $Q$ is the number of factors. 
After performing inference over the latent variables $d_{1:n}$ and $w_{1:m}$, each inferred variable $\hat{w}_k$ is a substitute for the subset of per-item variables $\tau_k$ that affect more than one purchase in $x_k = \{x_{1k}, \ldots, x_{nk}\}$.
We call each variable $\hat{w}_k$ a per-item substitute.

\subsection{Estimation with Substitutes}

The goal is to estimate the average social influence $\psi_j$ (\Cref{eq:psi_j}) of a
 person $j$, which is defined based on (hypothetical) interventions
to person $j$'s purchases yesterday.
The second stage of PIF will use the constructed per-person and per-item 
substitutes to construct an estimator of average social influence.
The idea will be to use the substitutes
$\hat{c}_i$ and $\hat{w}_k$ as drop-in replacements for
the unobserved confounders $\rho_{i}$ and $\tau_{k}$.

First, we define the substitutes more formally to be $\hat{c}_i = \mathbb{E}[c_i \g \mba]$ and  $\hat{w}_k = \mathbb{E}[w_k \g \mbx]$.
They are expected values of the posterior distributions
$p(c_{1:n} \g \mba)$ and $p(w_{1:m} \g \mbx)$.
Since calculating such posteriors exactly is generally intractable, 
we approximate them with variational inference.
Specifically, we adopt a Bayesian perspective by placing sparse Gamma priors on the latent variables, constraining them to be non-negative.
We approximate the posteriors
with mean-field variational inference (MF-VI)
\citep{gopalan2013scalable,blei2017variational}. 


Then, we use a Poisson likelihood to model today's outcomes given the substitutes,
\begin{align}
\label{eq:qhat}
\begin{split}
P(\mby \g \mba, \mbx, \hat{w}, \hat{c}) &= \prod_{i,k}\pois\bup{y_{ik} \g \lambda_{ik}} \s \quad
\lambda_{ik} = \gamma_k^\top \hat{c}_i
+ \alpha_j^\top \hat{w}_k
+ \textstyle \sum_j a_{ij} \cdot x_{jk} \cdot \beta_j.
\end{split}
\end{align}
The model says that the variable $y_{ik}$, the expected number of times person $i$ buys item $k$ today, is 
a linear function of their estimated traits, the estimated attributes of item $k$, and social influence from their peers.
We place sparse Gamma priors on the unobserved variables $\gamma$, $\alpha$, $\beta$,  ensuring that the term $\lambda_{ik}$ is non-negative.

The following result relates the model in \Cref{eq:qhat} to average social influence.
\begin{prop}
	\label[prop]{prop:estimation}
	If the functions of expected purchases, $\mu_{ik}(a,x)$, satisfy,
	\begin{align}
		\mu_{ik}(a,x) = \mathbb{E}[y_{ik} \g a_{ij}=a, x_{jk}=x, a_{i, \setminus j}, x_{k, \setminus j}, \hat{c}_i, \hat{w}_k],
	\end{align}
	and the purchases $y_{ik}$ are drawn from the Poisson model in \Cref{eq:qhat}, then $\psi_{j} = \beta_{j}$.
\end{prop}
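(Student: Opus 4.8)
The plan is to substitute the Poisson outcome model of \Cref{eq:qhat} directly into the backdoor-adjustment identity \Cref{eq:psi_j_bd}, now read with the substitutes $\hat{c}_i,\hat{w}_k$ playing the role of the confounders $\rho_i,\tau_k$ (that this substitute-adjusted quantity coincides with the causal target \Cref{eq:psi_j} is the separate validity claim sketched around \Cref{fig:subs}; here I only need to evaluate the adjusted formula under the assumed outcome model). The first step is to identify $\mu_{ik}(a,x)$ with the Poisson rate. Since the mean of a Poisson variable is its rate, and the rate $\lambda_{ik}$ is a deterministic function of exactly the variables being conditioned on, the conditional expectation in the hypothesis is, after splitting the fixed influencer $j$ out of the interaction sum (writing $j'$ for the summation index),
\begin{align}
\mu_{ik}(a,x) = g_{ik} + a\,x\,\beta_j, \qquad g_{ik} := \gamma_k^\top \hat{c}_i + \alpha_j^\top \hat{w}_k + \sum_{j'\neq j} a_{ij'}\,x_{j'k}\,\beta_{j'},
\end{align}
where $g_{ik}$ collects every term that does not depend on the pair $(a_{ij},x_{jk})$.

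Next I would form the do-contrast that appears inside \Cref{eq:psi_j_bd}. Because $g_{ik}$ is constant in $(a_{ij},x_{jk})$ -- the prognostic terms depend only on the substitutes and fixed coefficients, and the residual interaction terms involve only peers $j'\neq j$ -- the contrast collapses to
\begin{align}
\mu_{ik}(1,1) - \mu_{ik}(1,0) = \beta_j,
\end{align}
a single number free of $i$, $k$, the marginalized quantities $a_i^{-j},x_k^{-j}$, and the substitutes $\hat{c}_i,\hat{w}_k$. This cancellation is the crux of the argument: it is exactly the additive-linear placement of the interaction term $a_{ij}x_{jk}\beta_j$ in the rate that forces the interventional difference to equal a single coefficient rather than a confounder-weighted average.

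Finally, since the integrand is the constant $\beta_j$, every nested expectation in \Cref{eq:psi_j_bd} -- first over $(\hat{c}_i,\hat{w}_k)$, then over $(a_i^{-j},x_k^{-j})$ -- passes through unchanged, and the outer normalized sum averages a constant over the $n_j$ peers and $m$ items:
\begin{align}
\psi_j = \frac{1}{n_j\cdot m}\sum_{\substack{i:a_{ij}=1,\\ k}} \beta_j = \beta_j.
\end{align}

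The step demanding the most care is the identification $\mu_{ik}(a,x)=\lambda_{ik}$: I would verify that the conditioning set $\{a_{ij},a_{i,\setminus j},x_{jk},x_{k,\setminus j},\hat{c}_i,\hat{w}_k\}$, together with the fixed coefficients $\gamma,\alpha,\beta$, pins down $\lambda_{ik}$ completely, so that $\mu_{ik}$ is this deterministic rate and not a further average over unresolved randomness. Granting that, the remainder is the elementary cancellation above; the only genuine modeling content is that PIF's linear rate makes the adjusted interventional contrast exactly $\beta_j$.
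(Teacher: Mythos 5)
Your proposal is correct and follows essentially the same route as the paper's proof: identify $\mu_{ik}(a,x)$ with the Poisson rate $\lambda_{ik}$ via $\mathbb{E}[Y]=\lambda$, split the influence sum to isolate the $j$ term so the contrast $\mu_{ik}(1,1)-\mu_{ik}(1,0)$ collapses to $\beta_j$, and then observe that the nested expectations and the average over peers and items of a constant yield $\psi_j=\beta_j$. Your explicit bookkeeping of the $a\,x\,\beta_j$ interaction term and the remark about the conditioning set pinning down the rate are slightly more careful than the paper's write-up (which has a minor parenthesization slip in the cancellation display), but the argument is the same.
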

We prove this result in \Cref{apx:beta_j}.
Intuitively, the assumptions say that the substitutes $\hat{c}_i$ and $\hat{w}_k$ contain the same information 
about the expected purchase $y_{ik}$ as their latent confounder counterparts.
Further, the purchase $y_{ik}$ is drawn from the Poisson model in \Cref{eq:qhat}.
Under these assumptions, the average social influence $\psi_j$ of person $j$ is equal to $\beta_{j}$. 
This result holds in the context of infinite data, without guarantees on estimation quality from finite samples.

To estimate the average social influence $\psi_{j}$, we fit the model in \Cref{eq:qhat} with posterior inference.
We approximate the posterior distribution $p(\beta_{1:n}, \gamma_{1:m}, \alpha_{1:n} \g \mby, \mbx, \mba, \hat{c}_{1:n}, \hat{w}_{1:m})$ with MF-VI.
We place sparse Gamma priors on the unobserved variables $\gamma$, $\alpha$, $\beta$ and fit
the model with coordinate descent.
See \Cref{secApx:vi} for full details.

\subsection{Limitations} As with all methods of causal inference, PIF relies on strong assumptions, such as the ones required for
estimating valid substitutes.  In particular, we emphasize that if a trait captured by the variables
$\rho$ or $\tau$ only affects one interaction---one connection in
the network, one person who purchased item $k$, or one item purchased
by user $j$--- then \emph{it cannot be recovered} by this method for
constructing substitutes. 
The assumption that variables affect multiple interactions is untestable;
it has no implications for the observable data, and must instead be 
assessed carefully by users of the method.

We also stress that for the substitutes to be valid,
the factor models must capture the empirical data distribution.
When we empirically study PIF, we evaluate the fitted factor models
(see \Cref{apx:check} with model checks).

Finally, in practice, the factor models or the model of purchases
might be misspecified, leading biased estimates of social influence.
We leave this exploration of estimation quality to future work.

\section{Empirical evaluation}
\label{sec:empirical}
We empirically study the performance of PIF for estimating social influence.
The challenge in addressing this empirical question is that we do not have ground truth knowledge about the social influence of each person in a known network.
Instead, our strategy will be to use a real social network and per-person covariates to simulate purchasing data.
The purchases today will depend on both simulated influence and per-person and per-item confounders.

We create several datasets by varying the strength of confounding and
we evaluate the accuracy of PIF and related methods in recovering the simulated influence across these settings.
In the experiments, PIF and the compared methods differ only in the degree to which they adjust for confounders.
This allows us to evaluate how well PIF adjusts for the bias due to confounding compared to related methods when estimating social influence.

The key finding is that PIF is most accurate at recovering influence among related methods. Furthermore, strong empirical performance of multiple PIF variants suggests that the method is not sensitive to modeling choices. Moreover, fitting a joint factor model to purchases and network data to obtain per-person substitutes typically leads to better empirical performance than using substitutes obtained only from the network.

In \Cref{apx:sensitvity} of the appendix, we analyze the sensitivity of PIF to its assumptions. In particular, we consider the assumption that per-person and per-item confounders must affect multiple connections or purchases (\Cref{sec:pif}) to be captured by their substitutes. We find that PIF accurately estimates social influence even under moderate violation of its assumptions.

Finally, we apply PIF to perform exploratory data analysis on real data from Last.fm, a song-sharing platform.
Code and data will be publicly available.

\subsection{Semi-synthetic studies}
The  semi-simulated datasets use a real social network with 70k users and
millions of connections \footnote{\texttt{https://snap.stanford.edu/data/soc-Pokec.html}}. 
For each user, we observe the region where they live, which we found to be a strong predictor
of network connections.
Our strategy will be to simulate purchases that depend on homophily (i.e., people from the same region have similar purchasing habits),
preferences and social influence. 
We will evaluate how well the compared methods recover social influence as homophily and preference increasingly affect the purchases. 



\parhead{Setup.} 
To produce each simulated dataset, we first snowball sample a subgraph of 3k users from the full network.
Then, we simulate yesterday's purchases across 3k items.
Each purchase of a hypothetical item depend 
on the item's attributes and the user's preferences for those attributes.
We simulate each item's attributes based on a randomly chosen region in which it is popular
and a randomly drawn categorical variable, which is exogenous to the network.
We also generate variables that capture a user's preference for items that are popular in their region,
and their preference for a randomly chosen categorical variable. 
We use the Poisson generative model of purchases (\Cref{eq:x}) 
to sample yesterday's purchases based on the per-item and per-person variables.



To simulate today's purchases, we first simulate each user's social influence.
Then, we simulate today's purchases based on the variables that generated yesterday's purchases
as well as social influence, using the Poisson model in \Cref{eq:qhat}.
We describe the data generating process in full detail in \Cref{secApx:ExpDescription}.



 
\parhead{Methods compared.}
Our goal is to evaluate methods that adjust for confounding to different extents when estimating social influence.
As a gold standard, we run a version of PIF with the known per-person variables $\rho_{1:n}$ and $\tau_{1:m}$
instead of substitutes. We refer to this method as Oracle.

Then, we hide the known confounders and study two variants of PIF: 1) using the community model in \Cref{eq:a} to construct per-person substitutes; 2) using the joint model in \Cref{eq:ax} to construct per-person substitutes.
The factor models are both fit using five components based on model checking results in \Cref{apx:check}.
Both variants adjust for per-person and per-item confounders. We refer them as PIF-Net and PIF-Joint respectively.

Next, we study two methods that are closely related to those proposed in \citet{Shalizi:McFowland:2016} and \citet{chaney2015probabilistic}.
The first is a modified version of PIF that does not adjust for per-item substitutes in the Poisson model of today's purchases in \Cref{eq:qhat}. We refer to this method as Network-Only.
The idea is to only adjust for per-person confounders but not per-item ones. This is similar to the algorithm proposed by \citet{Shalizi:McFowland:2016}.

The second method relates to Social Poisson Factorization (SPF) \citep{chaney2015probabilistic}, 
\begin{align}
y_{ik} \g y_{-ik} \sim \textrm{Poi}(z_i^T \gamma_k + \textstyle \sum_j a_{ij} \beta_{ij} y_{jk}).
\end{align}
To make a fair comparison to PIF, we modify SPF to condition on yesterday's purchases $x_k$ and fit per-person parameters $\beta_{j}$, referring to it as modified SPF (mSPF).
In contrast to PIF, SPF is fit jointly, without inferring the per-person variables $z_i$ from a community model.
This means that mSPF may not adjust for all the information in both the per-person and per-item confounders.

Finally, we analyze the data without adjusting for any confounding by fitting \Cref{eq:qhat} with only the final social influence term. We refer to this method as Unadjusted.
Crucially, all the compared methods only differ in the degree to which they adjust for confounding; other modeling choices are held fixed, allowing for a direct comparison of the methods for estimating influence.

\begin{table}[t]
	        \footnotesize
	\caption{Accuracy for estimating social influence. 
	The PIF variants are the most accurate among related methods for recovering social influence (bolded entries show which variant performed best) across all confounding settings.
		Entries are the average MSE ($\times 10^3$) of estimated influence $\beta$ across 10 repeated simulations.
		Each simulation sampled a different subgraph of 3k users and 3k items.
		Standard errors of all methods are less than $10^{-3}$, except for mSPF.
		The settings studied are confounding due to homophily only, due to item attributes only, and due to both. Columns are labeled by confounding level.
\label{tb:simPokec}}
	\begin{center}
		\begin{tabular}{l @{\hskip 0em} *{3}{p{0.18cm}} *{3}{p{0.18cm}} *{3}{p{0.18cm}}}
			\toprule
			\multicolumn{1}{l}{Setting:} & \multicolumn{3}{c}{Item} & \multicolumn{3}{c}{Homophily} & \multicolumn{3}{c}{Both} \\
			\multicolumn{1}{l}{Confounding:}  & {Low} & {Med.} & {High} & {Low} & {Med.} & {High} & {Low} & {Med.} & {High}  \\
			\midrule
			Oracle &   0.17 &   0.17  &  0.2 &  0.29 &   0.27 &  0.25 &  0.13 &   0.13 &  0.12 \\
			\midrule
			Unadjusted &   1.56 &   2.0 &   2.16 &  1.91 &  2.4 &  2.44 &  2.21 &   2.49 &  2.56 \\
			Net-Only &   0.48 &  0.84 &  0.84 &  0.76 &  1.08 &  1.16 &  0.55 &   0.73 &  0.78 \\
			mSPF &  0.53 &  0.56 &  1.1 &  0.66 & 0.67 &   0.76 &   0.55 &  0.62 &  1.5  \\
			\midrule
			PIF-Net &   0.31 &  0.31 &  0.35 &  0.43 &  0.53 &  0.5 &  0.32 &  0.4 &  0.37 \\
			PIF-Joint&   \bf{0.26} &  \bf{0.2} &  \bf{0.3} &  \bf{0.38} &  \bf{0.46 }&  \bf{0.42} & \bf{0.29} &  \bf{0.35} & \bf{0.32}\\
			\bottomrule
		\end{tabular}
	\end{center}
\end{table}


\parhead{Experiment.} The goal of the empirical study is to induce varying amounts of confounding (low, medium, and high), and evaluate the accuracy of the compared methods for estimating social influence.
We report the mean squared error (MSE) compared to the ground truth social influence values that we simulated.
There are two potential sources of confounding: per-person variables that drive both connections and purchases, and per-item variables that drive purchases.

In the simulated dataset, the effect of per-person confounders on purchases can be controlled by making users prefer items from their region more strongly than those from other regions.
The effect of per-item confounders on purchases can similarly be controlled by making users prefer items from their randomly chosen group (categorical variable) than those from other groups.
We separately study confounding in three settings: 1) only from per-person confounders (Homophily), 2) only from per-item confounders (Item), and 3) from both types of confounders (Both).
\Cref{secApx:ExpDescription} describes the full details of this experiment.

\Cref*{tb:simPokec} summarizes the MSE across these settings. 
PIF variants are the most accurate at recovering influence in all cases.
The poor performance of Unadjusted overall demonstrates the harms of confounding bias.
The mSPF method has higher variance and is less accurate than PIF in all cases, suggesting that the two-stage approach used by PIF to recover substitutes and then perform adjustment is important for empirical performance.
Among the PIF variants, PIF-Joint performs best, suggesting that joint inference of the signal from item purchases together with network information, is useful when constructing substitute confounders and leads to more accurate influence estimates.
This strong empirical performance suggests that the PIF method is capable of effectively combine the information contained in the network and in the purchases to disentangle the effects of homophily and item attributes in measuring influence between users.

\begin{wraptable}[15]{l}{7.5cm}
	\caption{PIF is the least biased method when the ground truth influence for all users is set to 0. We report the average MSE ($\times 10^5$) across 10 runs. We consider the simulation setting where both the per-person and per-item confounders affect purchases (``Both'' in \Cref{tb:simPokec}). \label{tb:no_inf}}
	\begin{center}
		\begin{tabular}{l@{\hskip 1em}c@{\hskip 1em}c@{\hskip 1em}c}
			\toprule
			\textbf{Method} & \textbf{Low} & \textbf{Medium} & \textbf{High}\\
			\midrule
			Unadjusted & 209 & 232 &  241 \\
			mSPF & 9.4  & 14  & 14 \\
			Net-Only & 35 & 52 & 56\\
			\midrule
			PIF-Joint & {\bf 7.6} & {\bf 9.4} & {\bf 12.4 }\\
			\bottomrule
		\end{tabular}
	\end{center}
\end{wraptable}
\parhead{Bias in a zero-influence setting.}
To further study the estimation error of the compared methods, we consider semi-synthetic datasets drawn
from the ``Both" setting described in the previous experiment (in \Cref{tb:simPokec}) but with all ground truth influence values set to 0, i.e., $\beta_j=0$ for each Pokec user $j$.
\Cref{tb:no_inf} summarizes the MSE ($\times 10^5$) of estimating influence averaged over 10 runs.
Although all methods demonstrate some estimation error in this setting, PIF has the lowest MSE across low, medium and high degrees of confounding.

\subsection{Real-world study}

Having shown how the model works on semi-synthetic data, we now showcase potential applications on real datasets.
We applied PIF and related methods to a real social network and conducted exploratory analysis.
We used data collected from the song-sharing platform Last.fm \cite{sharma2016split}.
The complete dataset consists of about 100k users and 4m songs that users listen to, with the listening activities timestamped.
The platform allows users to form social connections.

We processed the data by filtering down to the 10k users that had the most network ties (in- and out-edges).
For these users, we collected the 6k most frequently listened to songs and divided the timestamps into three time periods (bounded above by the first and third quartiles, respectively).
We discarded users that did not listen to any of selected songs.
We were left with a dataset of approximately 4k users and 4k songs.
We considered activities from the first time period as the matrix $\mbx$,
those from second time period as the matrix $\mby$ and
reserved those from the third period to perform hold out evaluation.


We stress that it is only possible to conduct an exploratory data analysis with this data since there are no ground truth values of influence for the Last.fm users.
Our strategy will be to evaluate how well the methods can predict the held-out data, and examine the average influence across users determined by each method.
We might hypothesize that a model that captures the causal process of the data will generalize better to held-out data from an altogether new time period.

\parhead{Results.} 
\Cref{tb:lastfm} summarizes the findings from the real-world study.
First, we study the average influence across all users that were found by each method. 
The average influence recovered by the Unadjusted method reveals that there are correlated patterns of song-listening behavior between friends. However, the Unadjusted simply attributes all correlated behavior between friends to the influence of friends' listening patterns from the previous time period.
As expected, the remaining methods, which perform some form of adjustment, all estimate the average influence across users in the network to be lower.
We see that the estimated average influence from PIF is almost an order of magnitude smaller than reported by the Unadjusted method.

\begin{wraptable}[18]{l}{9.1cm}
	\caption{We report the average influence across all users found by the compared methods. We also studied the average held-out Poisson log likelihood (HOL; higher is better) of the methods for predicting the song-listening activities on held-out data, obtained from a future time period. 
		We also report the held-out area under the ROC curve (AUC; higher is better) achieved by the methods, calculated by treating the Poisson rate as a prediction score.
		PIF achieves the highest HOL and AUC. \label{tb:lastfm}}
	\begin{center}
		\begin{tabular}{l@{\hskip 2em}c@{\hskip 2em}c@{\hskip 2em}c}
			\toprule
			\textbf{Method} & \textbf{Influence} & \textbf{HOL} & \textbf{AUC}\\
			\midrule
			Unadjusted & 0.004 & -331 & 0.55 \\
			mSPF & 0.0003  & -198 & 0.66\\
			Net-Only & 0.001 & -191 & 0.55\\
			\midrule
			PIF-Joint & 0.0006 &  {\bf -186} & {\bf 0.67} \\
			\bottomrule
		\end{tabular}
	\end{center}
\end{wraptable}

Next, we consider the average Poisson log likelihood of the held-out data under the fitted models (higher is better).
As a baseline, we implemented a Poisson model that, for each held-out listen $y_{ik}$, predicts with the Poisson rate $\frac{1}{m_i}$, where $m_i$ is the number of songs that user $i$ listened to in the previous time period. The held-out data has an average Poisson log likelihood of -317.8 under this baseline model.
\Cref{tb:lastfm} shows that most of the compared methods perform better than the baseline, but PIF achieves the highest held-out log likelihood.

Finally, we also study the area under the receiver-operator curve (AUC), a classification metric that ranks the predictions based on a score. Here, we use the rate of the fitted Poisson models as a prediction score to compute the AUC for predicting whether a held-out song is listened to by a user or not. \Cref{tb:lastfm} shows that PIF achieves the highest AUC in this setting.


\section{Discussion}
Poisson influence factorization (PIF) estimates social influence from
observed social network data and
and data from a behavior of interest, such as users purchasing items. 
PIF uses latent variable models of such data to construct
substitutes for unobserved confounders: per-person variables that affect
both connections and purchases, and per-item variables that drive purchases.
PIF then uses substitutes as drop-in replacements when estimating social influence.
We demonstrated the accuracy of PIF for estimating social influence on semi-synthetic datasets. In \Cref{apx:sensitvity}, we analyzed the sensitivity of PIF to assumption violations. Finally, we applied PIF to study real data from Last.fm.



\parhead{Future work.} In this paper, we made the assumption that influence only spreads
behavior in a network from one time period to the next. 
One avenue of future work is modeling time-series influence.
We also focused on Poisson factor models and generalized linear model of outcomes.
Another avenue of future work is developing more flexible models of outcomes, networks and
purchases. For example, it is worthwhile to study heterogeneous influence effects with outcome models that capture this variation.
Finally, we only studied the estimation error of PIF empirically. It is important future work to establish technical results about estimation with finite samples and potentially mis-measured substitutes.

\section{Acknowledgements}
This work is supported by the Simon Foundation, the Sloan Foundation, ONR grants N00014-17-1-2131 and N00014-15-1-2209, and NSF grant IIS 2127869.

\bibliography{bib/causality,bib/influence,bib/networks}

\section{Appendix}

\subsection{Substitutes for confounders} \label{apx:substitutes}
In this section, we will formally state a result about valid substitutes for per-person and per-item confounders.

\begin{prop}
	\label[prop]{prop:subs}
	Let $u_j \subset \{\rho_j, z_j\}$ be the traits of person $j$ that affect more than one connection in $a_j$.
	Suppose there exists a variable $c_j$ so that the distribution $p(a_j \g c_j)$ factorizes as,
	\begin{equation}
	p(a_j\g c_j \s \eta_{a_j}) = \prod_{i=1}^{n} p(a_{ij} \g c_j \s \eta_{a_j}),
	\end{equation}
	for parameters $\eta_{a_j}$. Then, the variable $c_j$ contains the information that is contained in the set of user traits $u_j$.
	
	Let $\tau_k$ be the attributes of item $k$ that affect more than one purchase
	in $x_k$.
	Suppose there exists a variable $w_k$ so that the distribution $p(x_k \g w_k)$ factorizes as,
	\begin{equation}
	p(x_k \g w_k \s \eta_{x_k}) = \prod_{i=1}^{n} p(x_{ik} \g w_k \s \eta_{x_k}),
	\end{equation}
	for parameters $\eta_{x_k}$. Then, the variable $w_k$ contains the information contained in the variable $\tau_k$.
\end{prop}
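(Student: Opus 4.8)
The plan is to prove \Cref{prop:subs} as a conditional-independence plus identifiability argument, making precise the informal justification given under ``Why are substitutes valid?''. I would first fix what the conclusion means: to say that $c_j$ ``contains the information in $u_j$'' I would formalize it as $c_j$ being at least as informative about the connections as $u_j$ is, i.e.\ $a_j \perp u_j \g c_j$ (conditioning on the substitute makes the confounder uninformative about $a_j$), equivalently that there is a measurable map $g$ with $p(a_j \g c_j) = p(a_j \g u_j = g(c_j))$. This is the right target because it captures exactly what the substitute must do in the subsequent adjustment, while remaining agnostic about the parts of $u_j$ (the single-connection traits $v_j$ of \Cref{fig:subs}) that we do not expect to recover.

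First I would write the two factorizations side by side. Folding $z_j$ into the shared traits $u_j$ as in \Cref{fig:subs}, the causal model of \Cref{fig:causal-gm} implies that conditioning on $u_j$ renders the connections independent, $p(a_j \g u_j) = \prod_{i} p(a_{ij} \g u_j)$; this is precisely the graphical statement that every trait influencing two or more entries of $a_j$ has been placed in $u_j$. The hypothesis of the proposition supplies the parallel factorization $p(a_j \g c_j \s \eta_{a_j}) = \prod_i p(a_{ij} \g c_j \s \eta_{a_j})$. Marginalizing the latent variable in each case then gives two representations of the \emph{same} observed law, $p(a_j) = \int \prod_i p(a_{ij} \g u_j)\, dP(u_j) = \int \prod_i p(a_{ij} \g c_j)\, dP(c_j)$, each expressing $p(a_j)$ as a mixture of product distributions over the $n$ coordinates.

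The crux is then to invoke uniqueness of the mixture-of-products representation. Because each trait in $u_j$ influences \emph{more than one} coordinate of $a_j$, the dependence it induces is visible across multiple conditionally independent ``views,'' so the latent structure is identifiable from the observed dependence (this is the pinpointing idea of \citet{Wang:Blei:2019}, and it is also exactly why a trait touching a single connection leaves no cross-coordinate signature and cannot be recovered). Under this identifiability the two mixture representations must coincide up to a measurable reparametrization, which yields the map $g$ above and hence $a_j \perp u_j \g c_j$ --- the assertion of the first half of the proposition. The second half follows by the identical argument with the roles transposed: replace the row $a_j$ by the column $x_k = (x_{1k},\dots,x_{nk})$, the shared person-traits $u_j$ by the shared item-attributes $\tau_k$, and the network substitute $c_j$ by the purchase substitute $w_k$, with the factorization hypothesis on $p(x_k \g w_k)$ standing in for the one on $p(a_j \g c_j)$.

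I expect the identifiability step to be the main obstacle: pinning down precisely which uniqueness theorem for mixtures of product measures applies, and under what support and regularity conditions, while reconciling it with the mild ``affects more than one connection'' multiplicity assumption. The conditional-independence bookkeeping in the first two steps and the transposition to the item case are then routine, so essentially all of the mathematical content is concentrated in justifying that the substitute's factorization must agree with the confounder's.
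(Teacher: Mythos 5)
Your route diverges from the paper's at the crucial step, and the step you yourself flag as the ``main obstacle'' is a genuine gap that your argument does not close. You marginalize out the latent variables and compare two mixture-of-products representations of the observed law $p(a_j)$, then appeal to uniqueness of such representations to conclude that $c_j$ and $u_j$ must coincide up to reparametrization. But identifiability of mixtures of product measures is \emph{false} in general: two different latent variables can induce the same marginal mixture without one determining the other (finite mixtures of products require extra conditions in the style of Kruskal's theorem --- at least three suitably ``linearly independent'' views --- and the proposition's hypothesis of ``affects more than one connection'' is far weaker than that). Since you never supply the uniqueness theorem or the regularity conditions under which it would hold, the chain from ``same observed marginal'' to ``$a_j \perp u_j \g c_j$'' is broken. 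Moreover, by marginalizing you discard exactly the object the hypothesis hands you --- the \emph{joint} conditional structure of $(a_j, c_j)$ --- and then have to work to recover it.

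The paper's own proof never marginalizes and needs no mixture identifiability. It reads the factorization hypothesis directly as the statement that $c_j$ (resp.\ $w_k$) d-separates every pair of connections $(a_{ij}, a_{i'j})$ (resp.\ purchases $(x_{ik}, x_{i'k})$), and argues by contradiction: if some subset $U \subset u_j$ affecting two or more connections were \emph{not} contained in $c_j$, then $U$ would be an unblocked common cause of those two connections, so they could not be conditionally independent given $c_j$ --- contradicting the assumed factorization. That is the entire argument (it implicitly assumes a faithfulness-type condition so that the common cause actually induces dependence, and it leaves ``contains the information'' at the same informal level as the statement itself, but it is self-contained). Your formalization of the conclusion as $a_j \perp u_j \g c_j$ is a reasonable and arguably sharper target than the paper's, and your observation about why single-connection traits leave no cross-coordinate signature matches the paper's discussion; but to make your proof go through you would either need to import and verify a concrete identifiability theorem, or abandon the marginalization and argue via d-separation on the joint model as the paper does.
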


\begin{proof}
	Assume that a variable $w_k$ which satisfies the factorization above exists. Then, by definition, each pair of variables $(x_{ik}, x_{i'k})$ in the set $x_k$ are d-separated by $w_k$.
	Suppose that $w_k$ does not contain all the information contained in the confounders $\tau_k$, common causes of multiple variables in $x_k$. That means, there exists some subset $U \subset \tau_k$ that is not contained in $w_k$ but which are parents of at least two variables, $(x_{ak}, x_{bk})$ in $x_k$. But, we assumed that $w_k$ d-seperates all pairs of variables in $x_k$. By contradiction, $w_k$ contains all the information contained in $\tau_k$. The same proof applies for the variable $c_j$.
\end{proof}

\subsection{Proof of Proposition 3.1}
\label{apx:beta_j}
We give the proof for \Cref{prop:estimation}.
\begin{proof}
	Recall that $a_{i,\setminus j} = a_{i}^{'}$ and $x_{k, \setminus j} = x_{k}^{'}$.
	Define $\xi_{ik} = \mathbb{E}[y_{ik} \g a_{ij}=a, x_{jk}=x, a_{i, \setminus j}, x_{k, \setminus j}, \hat{c}_i, \hat{w}_k]. $
 	Recall from \Cref{eq:psi_j_bd},
	\begin{align}
	\psi_{j} &= \frac{1}{n \cdot m} \sum_{i,k} \mathbb{E}_{\rho_{i}, \tau_{k}}  \bigg[ 
	\mathbb{E}_{a_{i}^{'}, x_{k}^{'}} \big[ \mu_{ik}(1, 1) - \mu_{ik}(1,0) \big] \bigg].
	\end{align}
	
	Consider the Poisson likelihood of today's purchases $y_{ik}$ in \Cref{eq:qhat}. Using the fact that $\E{Y} = \lambda$ for a random variable $Y \sim \pois(\lambda)$,
	the fitted rate $\lambda_{ik}$ for each purchase $y_{ik}$ gives us $\xi_{ik}(a,x)$ when we plug in the values $a$ and $x$ for $a_{ij}$ and $x_{jk}$ in the expression for $\lambda_{ik}$.
	
	Substituting $\lambda_{ik}$ for $\mu_{ik}(a,x)$ and by assumption that $\mu_{ik}(a,x) = \xi_{ik}(a,x)$,  we write
	\begin{align}
	\begin{split}
	&\mu_{ik}(1, 1) - \mu_{ik}(1,0) =\\
	& \quad \gamma_k^\top \hat{c}_i + \alpha_i^\top \hat{w}_k + \beta_j +  \sum_{l \neq j} a_{il} x_{lk} \beta_l  \\
	& \qquad \qquad -  \gamma_k^\top \hat{c}_i + \alpha_i^\top \hat{w}_k + \sum_{l \neq j} a_{il} x_{lk} \beta_l = \beta_{j}.
	\end{split}
	\end{align}
	We separated the sum over people $l$ into the term that corresponds to person $j$ plus the remaining terms.
	
	Finally, substituting the above equation into the definition of $\psi_{j}$,
	\begin{align}
	\psi_{j} &= \frac{1}{n \cdot m} \sum_{i,k} \mathbb{E}_{a_{i}^{'}, x_{k}^{'}}  \bigg[ \mathbb{E}_{\rho_{i}, \tau_{k}}\big[ \beta_{j} \big] \bigg] = \beta_{j}.
	\end{align}
\end{proof}

\subsection{Model Checking.}\label{apx:check}
\Cref{sec:pif} emphasizes that factor models provide substitute confounders only when the fitted models capture the empirical data.
Thus, we must always perform model checks to license the use of substitute confounders.
Here, we perform posterior predictive checks (PPC), one form of model checking.
We describe the PPC procedure using the factor model of the network in \Cref{eq:a} as an example.
\begin{enumerate}
	\item  Create a matrix of held out data $a^{\textrm{heldout}}$ by randomly holding out network connections for each user.
	\item Create a replicated dataset $a^{\textrm{rep}}$ where for each user $i$ we draw $s$ samples from the posterior predictive,
	\begin{equation}
	p(a^\textrm{rep}_{ij} | a_i) = \int p(a^\textrm{rep}_{ij} | z_i, z_j) p(z | a_i) dz \nonumber
	\end{equation}
	We approximated this by sampling from $p(a^\textrm{rep}_{ij} | \bar{z}_i, \bar{z}_j)$ where
	$\bar{z} = \mathbb{E}[p(z|a)]$ is the posterior mean calculated from the observed dataset.
	\item For a chosen discrepancy function $D(a)$ (a common choice is log likelihood), calculate its value $D(a^{\textrm{rep}})$ on the replicated dataset. Then calculate its value $D(a^{\textrm{heldout}})$ on the held out dataset.
	\item The posterior predictive p-value is $p(D(a^{\textrm{rep}}) > D(a^{\textrm{heldout}}))$. It can be estimated empirically by sampling $m$ different replicated datasets and producing the ratio $\frac{m_R}{m}$ where $m_R$ is the number of datasets in which $D(a^{\textrm{rep}}) > D(a^{\textrm{heldout}})$.
\end{enumerate}
P-values that are close to 0.5 suggest that the model explains the replicated data as well as it explains the heldout data; this is ideal.

\begin{table}[h!t]
	\centering
	\caption{Posterior predictive checks suggest that the studied factor models fit the empirical data well.\label{tab:predictive}}
	\begin{tabular}{ccc}
		\toprule
		Dim. & \multicolumn{2}{c}{Mean predictive scores from PPC}\\
		& $P(a^\textrm{rep.} > a^\textrm{heldout})$ & $P(x^\textrm{rep.} > x^\textrm{heldout})$\\
		\midrule
		3 & 0.77 & 0.64  \\
		5 & 0.69 & 0.56 \\
		8  & 0.71  & 0.45   \\
		10  & 0.79  & 0.45    \\
		\bottomrule
	\end{tabular}
\end{table}

We performed PPCs on the real social network, and the simulated item purchasing data.
We simulated item purchases using the setting where both homophily and confounding are present, and set all the parameters denoted by $s$ to be 50. We fit the Poisson community model \Cref{eq:ax} and Poisson matrix factorization \Cref{eq:x} to the empirical network and purchasing data.
We varied the number of components used to fit the factor models of item purchases $\mbx$ and the network $\mba$ across $3,5,8$ and 10. We simulated 20 datasets for each setting and averaged the PPC p-values across these experiments. To estimate the PPC p-values, we used 100 replicated datasets within each experiment.
\Cref{tab:predictive} summarizes the results from these PPCs.
It is not surprising that the simulated item purchases are well-explained by Poisson matrix factorization, but interestingly, even the real social network is fit well with a Poisson community model.
Based on these results, we used 5 components to fit both factor models.

\subsection{Empirical data description}\label{secApx:ExpDescription}
Users' and items' region are given by the one-hot encoding matrix $r$. 
To simulate other item attributes that do not depend on region, each item and user are associated with a randomly drawn categorical variable, given by the matrix $v$.
The simulation is,
\begin{align}
&\rho_{ip} \sim r_i \cdot \mathrm{Gam}(a, b) + (1-r_i) \cdot \mathrm{Gam}\bup{\frac{a}{s_\rho}, b} \nonumber \\
&\gamma_{kp} \sim r_k \cdot \mathrm{Gam}(a, b) + (1-r_k) \cdot \mathrm{Gam}\bup{\frac{a}{s_\gamma}, b} \nonumber \\
&\tau_{kp} \sim v_k \cdot \mathrm{Gam}(a, b) + (1-v_k) \cdot \mathrm{Gam}\bup{\frac{a}{s_\tau}, b} \nonumber\\
&\alpha_{ip} \sim v_i \cdot \mathrm{Gam}(a, b) + (1-v_i) \cdot \mathrm{Gam}\bup{\frac{a}{s_\alpha}, b} \nonumber\\
&x_{ik} \sim \pois(\mu_{ik}) \nonumber \\
&y_{jk} \sim \pois(\mu_{jk} + \sum_i a_{ij} \beta_{i} x_{ik}) \s \quad \beta_i \sim \mathrm{Gam}(0.005, 0.1) \nonumber \\
&\mu_{ik} \in \{\rho_i^\top \gamma_k, \alpha_i^\top \tau_k , \rho_i^\top \gamma_k + \alpha_i^\top \tau_k \} \nonumber
\end{align}
In words, users have preferences $\rho$ and $\alpha$ based on their region and the other group with which they are associated. Preferences are mixtures over these groups, with proportions controlled by $s_\rho$ and $s_\alpha$. For example, when $s_\rho$ is big, users' preferences $\rho$ are determined mostly by the region to which they belong.
Item have attributes $\tau$ and $\gamma$ based on which groups of people they appeal to. 
These attributes are also mixtures over groups, and their proportions are controlled by $s_\tau$ and $s_\gamma$.

We create co-purchasing patterns by simulating influence $\beta$ and by correlating yesterday and today's purchases with preferences ($\rho$ and $\alpha$) and attributes ($\tau$ and $\gamma$). This correlation creates confounding due to the item attributes and confounding due to homophily, because a person's region affects her preferences. We can control the strength of confounding by varying the parameters $s$.

To simulate varying amounts of confounding, the parameters $s_\rho$ and $s_\tau$ are set to 50 but the parameters $s_\gamma$ and $s_\alpha$ are varied across $(10, 50, 100)$, corresponding to low, medium and high amounts of confounding. 
To simulate the influence of each user, we sample from a sparse Gamma distribution. 

The outcome model given in \Cref{eq:qhat} places a Gamma prior of Gam(0.01, 10) on the variables $\alpha$
and $\tau$, and a Gamma prior of Gam(0.1, 0.1) on the influence variable $\beta$.
We noted that some users did not have any friends in the social network. 
For these users, there is no data for PIF to learn influence and so we omitted such users.
All experiments were run on a CPU only. Each single experiment with 3k users and items took less than 10 seconds on average to complete.

\subsection{Variational updates for PIF}\label{secApx:vi}
The PIF method is fitted using mean-field variational inference.
Given an observed matrix of today's item purchases, $y$, we would
like to calculate the posterior distribution of the user preferences $\alpha$,
item attributes $\gamma$ and user influence $\beta$, 
$p(\alpha, \gamma, \beta \g y)$.

Variational inference approximates the posterior distribution using optimization:
it finds a variational distribution of the latent variables that is closest in Kullback-Liebler (KL) divergence 
to the true posterior distribution.
Mean-field variational inference uses a fully factorized variational family of distributions.
The optimal variational distribution for each latent variable depends on its complete conditional,
the distribution of a single latent variable conditioned on all other latent and observed variables.
If a model is conditionally conjugate, i.e., the posterior
distribution of the latent variables is in the same family as the prior distribution of latent variables, then complete conditionals are in the exponential
family. This leads to coordinate ascent algorithms where we update the variational parameters for each latent variable in turn, 
holding the others fixed.

To make the outcome model for PIF conditionally conjugate, we introduce auxiliary latent variables, $\psi_{ikl}, \xi_{ikp}$ and $\tau_{ikq}$, 
\begin{align}
&\tau_{ikq} \sim \pois(\gamma_{kq} \bar{u}_{iq}) \s \quad \xi_{ikp} \sim \pois(\alpha_{ip} \bar{w}_{kp})\\
&\psi_{ikl} \sim \pois(\beta_l a_{il} x_{lk}) \s \quad y_{ik} = \sum_{q,p,l} \tau_{ikq} + \xi_{ikp} + \psi_{ikl}
\end{align}

With these auxiliary variables in hand, the complete conditional for each latent variable is,
\begin{align}
\alpha_{ip} \g y, \bar{w}, \xi &\sim \gam(a + \sum_k \xi_{ikp},  b + \sum_k \bar{w}_{kp}) \\
\gamma_{kq} \g y, \bar{u}, \tau &\sim \gam(a + \sum_i \tau_{ikq},  b + \sum_i \bar{u}_{iq}) \\
\beta_{l} \g y, a, x, \psi  &\sim \gam(c + \sum_{ik} \psi_{ikl}, d + \sum_{ik} a_{il} x_{lk}) \\
\tau_{ik} \g y, \gamma, \bar{u} &\sim \mult(y_{ik}, \frac{\gamma_{k} \bar{u}_{i}}{\sum_{q'} \gamma_{kq'} \bar{u}_{iq'}}) \\
\xi_{ik} \g y, \alpha, \bar{w} &\sim \mult(y_{ik}, \frac{\alpha_{i} \bar{w}_{k}}{\sum_{p'} \alpha_{ip'} \bar{w}_{kp'}}) \\
\psi_{ik} \g y, x, a, \beta &\sim \mult(y_{ik}, \frac{\beta a_{i} x_{k}}{\sum_{l'} \beta_l a_{il} x_{lk}}),
\end{align}
where the scalar values $a, b, c$ and $d$ are prior shape and rate parameters for the corresponding
latent variables.

The variational distribution for each latent variable is in the same family as its complete conditional.
Let $\kappa^\alpha_{ip}, \kappa^\gamma_{kq}, \kappa^\beta_{l}$ be the shape parameter of the variational
distribution for each corresponding latent variable,
and let $\nu^\alpha_{ip}, \nu^\gamma_{kq}, \nu^\beta_{l}$ be the rates defined in the same way.
Let $\phi^\tau_{ik}, \phi^\xi_{ik}, \phi^\psi_{ik}$ be the variational parameters for the auxiliary latent variables.

Note that the rate parameter of each complete conditional for the latent variables $\alpha, \gamma$ and $\beta$
involves variables that are observed (from the first stage of fitting substitute confounders).
As such, we can set the rate variational parameters for these latent variables without the need for updates,
\begin{align}
\nu^\alpha_{ip} &\leftarrow b + \sum_k \bar{w}_{kp} \\
\nu^\gamma_{kq} & \leftarrow b + \sum_i \bar{u}_{iq}\\
\nu^\beta_l &\leftarrow d + \sum_{ik} a_{il} x_{lk}
\end{align}

The coordinate ascent algorithm applies the following updates to the remaining variational parameters in each iteration,
\begin{align}
\kappa^\alpha_{ip} &\leftarrow  a + \sum_k y_{ik} \phi^\xi_{ik}\\
\kappa^\gamma_{kq} &\leftarrow a + \sum_i y_{ik} \phi^\tau_{ik}\\ 
\kappa^\beta_l &\leftarrow c + \sum_{ik} y_{ik} \phi^\psi_{ik}\\ 
\phi^\xi_{ik} &\propto \exp\{\Psi(\kappa^\alpha_{ip}) - \log(\nu^\alpha_{ip})\} + \bar{w}_{kp} \\
\phi^\tau_{ik} &\propto \exp\{\Psi(\kappa^\gamma_{kq}) - \log(\nu^\gamma_{kq})\} + \bar{u}_{iq} \\
\phi^\psi_{ik} &\propto \exp\{\Psi(\kappa^\beta_{l}) - \log(\nu^\beta_{l})\} + a_{il} x_{kl}.
\end{align}
where $\Psi(\cdot)$ is the digamma function.

\subsection{Sensitivity analysis}
\label{apx:sensitvity}

\begin{wrapfigure}[21]{r}{0.55\textwidth}
		\vspace{-1cm}
		\caption{PIF's accuracy only worsens drastically when the assumptions around substitute confounders are violated in a dramatic way: 30\% of friends have shared random preferences for nearly 1k items that are twice as strong as their other preferences. \label{fig:sensitivity}}
	\includegraphics[scale=0.35]{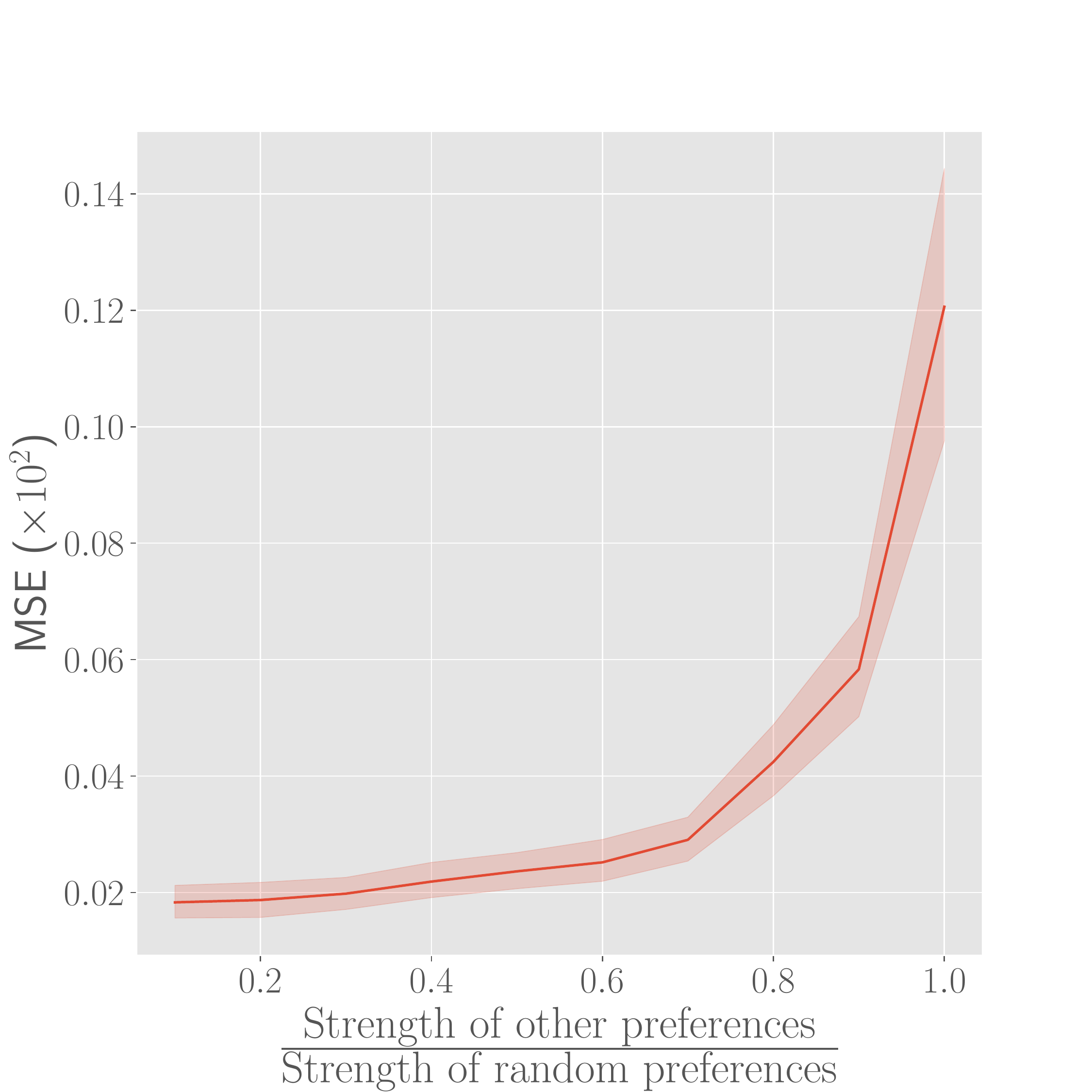}
\end{wrapfigure}

How sensitive are PIF's results to its assumptions being violated?
PIF relies on the constructed per-item and per-user substitute confounders
to adjust for confounding effects when estimating influence.
Substitute confounders produced by factor models can only capture
variables that affect multiple interactions, e.g., a latent item attribute
that affects multiple purchases that a user makes.
We evaluate the accuracy of PIF's influence estimates as this assumption
about substitutes is increasingly violated.

We construct assumption violations in semi-simulated datasets.
We randomly choose 30\% of all friends to share a preference
for about 1k randomly chosen items.
Notice that because friends share a preference for items
in yesterday and today's purchases, it will appear as though a user 
influences her friend.
Moreover, because they randomly select items to like,
the preference is not shared across multiple purchases of a user,
violating the assumption for substitute confounders.
Factor models fit to these purchases cannot recover these random
preferences.

\parhead{Findings.} We increase the strength of friends' random preferences for items relative
to their other preferences (based on region and other covariates).
As the relative strength increases, the substitute confounder assumptions
are increasingly violated.
\Cref{fig:sensitivity} shows the MSE of PIF's influence estimates as this relative strength of preferences varies.
The plot shows that PIF's accuracy only worsens drastically when the assumption violation is dramatic: 
30\% of friends have shared random preferences for nearly 1k items that are twice as strong as their other preferences. 
Analysts can assess if this degree of assumption violation is realistic or not
when applying PIF.


\end{document}